\theoremstyle{plain}
\newtheorem{Theorem}{Theorem}
\newtheorem{Lemma}{Lemma}
\newtheorem{Proposition}{Proposition}
\newtheorem{Corollary}{Corollary}
\theoremstyle{definition}
\newtheorem{Definition}{Definition}
\theoremstyle{remark}
\newtheorem{Remark}{Remark}
\newtheorem*{cou}{Counterexample}
\journal{Discrete Mathematics}
\begin{document}

\begin{frontmatter}



\title{On Counting Subring-Subcodes of Free Linear Codes Over Finite Principal Ideal Rings}


\author[rkb]{Ramakrishna Bandi}\ead{bandi.ramakrishna@gmail.com}
\author[aft]{Alexandre Fotue Tabue}\ead{alexfotue@gmail.com}
\author[ed]{Edgar Martínez-Moro\corref{cor1}} \ead{edgar.martinez@uva.es}\cortext[cor1]{Corresponding author. Partially
funded by MINECO MTM2015-65764-C3-1-P grant.}

\address[rkb]{Department of Mathematics, School of Engineering and Technology, BML Munjal University, India}
\address[aft]{Research and Training Unit for Doctorate in Mathematics, Computer Sciences and Applications, University of Yaounde I}
\address[ed]{Institute of Mathematics, University of Valladolid, Spain}

\begin{abstract} Let $\texttt{R}$ be a finite principal ideal ring and $\texttt{S}$ the Galois extension of $\texttt{R}$ of degree $m.$ For $k$ and $k'$,
positive integers we determine the number of free
$\texttt{S}$-linear codes $\mathcal{B}$ of length $\ell$ with the
property $k=\texttt{rank}_\texttt{S}(\mathcal{B})$ and
$k'=\texttt{rank}_\texttt{R}(\mathcal{B}\cap\texttt{R}^\ell)$. This corrects a wrong result \cite[Theorem 6]{Lyle78} which was given in the case of finite fields.
\end{abstract}

\begin{keyword} Finite chain ring,  Galois extension, Trace map, Linear code, Chinese remainder theorem.


\emph{AMS Subject Classification 2010:}   13B05,  15A03, 16P70,
94B05, 94B25.
\end{keyword}

\end{frontmatter}

\section{Introduction}

Many codes over a finite field $\mathbb F$ can be seen as
subfield-subcodes of codes that are defined over a some extension field of $\mathbb F$.
Unfortunately there is no general formula for the dimension or the
minimum distance of such codes. However there are good bounds for some families of
codes and also some  formulae have been obtained for the true dimension
in the case of some alternant codes, toric codes or some families of
Goppa codes (see \cite{hernando} and the references therein). In
a more general way,  given   a finite principal
ideal ring $\texttt{R}$ and a Galois extension $\texttt{S}$
 of $\texttt{R}$ of degree $m$,  one can define
an $\texttt{S}$-linear code $\mathcal{B}$ of length $\ell$ as  $\texttt{S}$-module of $\texttt{S}^\ell$. The rank of a linear code
$\texttt{rank}_\texttt{S}(\mathcal{B})$ will be the minimum number of
its generators.  The $\texttt{R}$-linear code $\mathcal{B}\cap\texttt{R}^\ell$ is called the
\emph{subring-subcode} of $\mathcal{B}$ to $\texttt{R}.$ The relationship between subring subcodes and trace codes over finite chain rings has been studied in
\cite{MNR13} revealing a generalization of the classical Delsarte's Theorem.
 In \cite{Lyle78}, Lyle determined the  number of distinct $\mathbb{F}_{q^m}$-linear codes of length $\ell$ and of dimension $k,$ with a fixed dimension of their subfield subcodes over $\mathbb{F}_q$. However the result is incorrect. The following example illustrates the same.

\begin{cou}
We have 21  different  2-dimensional codes of
length 3 over $\mathbb F_4$. Thus
from  \cite[Theorem 6]{Lyle78}, it follows that the $\mathbb
F_2$-subcode of any of them is the trivial code
$\{ (0,0,0)\}.$ Consider now
$\mathbb F_4=\{ 0,1,\alpha ,\beta\}$ such that $\alpha \cdot \beta
=\alpha + \beta =1$, and the 2-dimensional code
 $\mathcal C$ generated by $(1,0,\alpha)$ and $(0,1,\beta)$ .
Clearly $(1,1,1)= (1,0,\alpha)+(0,1,\beta)$ belongs to $\mathcal C\cap \mathbb F_2^3$, which is a contradiction to the previous statement.
\end{cou}

In this paper, we present a correct answer to the counting problem in
Lyle's paper in the broader class of codes possible where linear
algebra can be  accomplished, that
is free codes over finite principal ideal rings that indeed also
covers the finite field case. The Galois invariance of a code
\cite{MNR13,Fotue} plays an important role on the paper.
 The paper is organized as follows: In Section 2 we present some definitions and
 preliminary results. Section 3 provides a formula for counting
the number  of free $\texttt{S}$-linear codes of given rank whose
the rank of its subring subcode is given. Finally in Section 4 we
extend the results to PIRs.

\section{Preliminaries}


Throughout this section $\texttt{R}$ denotes a
\emph{finite chain ring with invariants $(q,s)$}, that is
all its ideals form a chain under inclusion $\texttt{R}\supsetneq
\texttt{R}\theta \supsetneq \cdots\supsetneq
\texttt{R}\theta^{s-1} \supsetneq \texttt{R}\theta^s =\{0\}$, where
$\theta$ a generator of its maximal ideal
$\mathfrak m=\texttt{R}\theta$ and $\texttt{R}/\mathfrak m=\mathbb F_q$, a finite field with $q=p^n$ elements, $p$ is a prime, $n$ a positive number.
The canonical projection $\pi :\texttt{R}\rightarrow \mathbb{F}_q$ naturally extends to
$\texttt{R}[X],$ acting on  the
coefficients.
 We say that the ring $\texttt{S}$ is an
\emph{extension} of $\texttt{R}$ and we denote it by
$\texttt{S}|\texttt{R}$ if $\texttt{R}$ is a subring of $\texttt{S}$
and $1_\texttt{R} = 1_\texttt{S}.$ Let $f$ be a polynomial
over $\texttt{R}$. Then $(f)$ is an ideal of
$\texttt{R}[X]$ generated by $f.$ A polynomial $f$ is called \emph{basic
{irreducible}} if $\pi(f)$ is
{irreducible} over $\mathbb{F}_q.$

 A finite ring $\texttt{S}$ is a \emph{Galois
extension} of $\texttt{R}$ of degree $m$ if
$\texttt{S}\simeq\texttt{R}[X]/(f)$,  where $f$ is a monic basic
irreducible over $\texttt{R}$ of degree $m.$ By
$\texttt{Aut}_\texttt{R}(\texttt{S})$ we denote the group of
ring automorphisms of $\texttt{S}$ which fix the elements of
$\texttt{R}.$ It is well {known}
that if  $\texttt{S}$ is the Galois extension of $\texttt{R}$ of
degree $m$ then $\mathfrak m_S=\texttt{S}\theta$ and
     $\texttt{Aut}_\texttt{R}(\texttt{S})$ is the cyclic group of order $m.$
The map
$\texttt{Tr}_\texttt{R}^\texttt{S}:=\sum\limits_{\rho\in\texttt{Aut}_\texttt{R}(\texttt{S})}\rho$
is called the \emph{trace map} of the Galois extension
$\texttt{S}|\texttt{R}.$ When $\texttt{S}=\mathbb{F}_{q^m},$ we
write $\texttt{T}_m:\mathbb{F}_{q^m}\rightarrow\mathbb{F}_{q}$.
Note that the trace map of $\texttt{S}$ over $\texttt{R}$ is an
$\texttt{R}$-module epimorphism of $\texttt{S}$ and
{$\texttt{T}_m\circ\widetilde{\pi}=\pi\circ\texttt{Tr}_\texttt{R}^\texttt{S},$
where $\widetilde{\pi}:\texttt{S}\rightarrow\mathbb{F}_{q^m}$ is a
natural map with
$\widetilde{\pi}_{\upharpoonright_{\texttt{R}}}=\pi.$}



An $\texttt{R}$-linear code  $\mathcal{C}$ of length $\ell$ is a
$\texttt{R}$-submodule of $\texttt{R}^\ell.$ A linear subcode
$\mathcal{D}$ of  $\mathcal{C}$  is an $\texttt{R}$-submodule of $\mathcal{C}$. $\mathcal{D}$ is said be \emph{proper} if $\mathcal{D}\neq\mathcal{C}$ and
$\mathcal{D}\neq\{\textbf{0}\}.$   The code $\mathcal{C}$ of length $\ell$ and
 rank $k$ is said to be \emph{free} if $\mathcal{C}\simeq\texttt{R}^k$ as
an $\texttt{R}$-module. The following result allows us to count of
all free subcodes of arbitrary free $\texttt{R}$-linear code  with
given  rank.

\begin{Theorem}\label{subfree} If $\mathcal{C}$ is a free
$\texttt{R}$-linear code of length $\ell$ and rank $k$ then the
number of free $\texttt{R}$-linear subcodes of $\mathcal{C}$ of
rank $k',$ is given by $$\left[\left|
\begin{array}{c}
  k \\
  k'
\end{array}%
\right|\right]_{(q,s)}:=q^{(s-1)(k-k')k'}\left[
\begin{array}{c}
   k \\
   k'
\end{array}\right]_q,
\text{ where }\left[\begin{array}{c}
   k \\
   k'  \\
\end{array}\right]_q:=\left\{%
\begin{array}{ll}
    0, & \hbox{if $k<k';$} \\
    1, & \hbox{if $k'=0;$} \\
    \overset{k'-1}{\underset{i=0}{\prod }}\frac{q^k-q^i}{q^{k'}-q^i}, & \hbox{otherwise.}
\end{array}%
\right.$$
\end{Theorem}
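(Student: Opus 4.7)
The plan is to parametrize free rank-$k'$ subcodes of $\mathcal{C}\cong\texttt{R}^k$ by their generator matrices and then count via an orbit argument. Each free rank-$k'$ submodule $\mathcal{D}\subseteq\texttt{R}^k$ admits a $k'\times k$ matrix $M$ whose rows form an $\texttt{R}$-basis of $\mathcal{D}$, and two such matrices $M,M'$ represent the same $\mathcal{D}$ exactly when $M'=UM$ for some $U\in GL_{k'}(\texttt{R})$. Thus the desired number of subcodes equals $N/|GL_{k'}(\texttt{R})|$, where $N$ is the number of $k'\times k$ matrices over $\texttt{R}$ whose rows freely generate a rank-$k'$ submodule.

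First I would characterize those matrices: since $\texttt{R}$ is local, $M$ has rows that freely generate a rank-$k'$ submodule if and only if $M$ admits a right inverse, if and only if some $k'\times k'$ minor of $M$ is a unit, if and only if the reduction $\pi(M)\in\mathbb{F}_q^{k'\times k}$ has rank $k'$ over $\mathbb{F}_q$. The number of rank-$k'$ matrices in $\mathbb{F}_q^{k'\times k}$ equals $\prod_{i=0}^{k'-1}(q^k-q^i)$, and each such matrix lifts in exactly $(q^{s-1})^{k'k}=q^{(s-1)k'k}$ ways to $\texttt{R}$, since every entry has a fibre of size $|\mathfrak m|=q^{s-1}$ under $\pi$. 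Hence $N=q^{(s-1)k'k}\prod_{i=0}^{k'-1}(q^k-q^i)$. The same lifting idea applied to square matrices, using that $U$ is a unit in the matrix ring if and only if $\pi(U)\in GL_{k'}(\mathbb{F}_q)$, yields $|GL_{k'}(\texttt{R})|=q^{(s-1)(k')^2}\prod_{i=0}^{k'-1}(q^{k'}-q^i)$.

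Dividing these quantities gives
$$\frac{N}{|GL_{k'}(\texttt{R})|}=q^{(s-1)(k-k')k'}\prod_{i=0}^{k'-1}\frac{q^k-q^i}{q^{k'}-q^i}=q^{(s-1)(k-k')k'}\left[\begin{array}{c}k\\k'\end{array}\right]_q,$$
which matches the claimed formula; the edge cases $k'=0$ and $k<k'$ are recovered trivially by the convention on the Gaussian binomial. The main subtlety is the characterization step: making precise why $M$ freely generates a rank-$k'$ submodule exactly when $\pi(M)$ has full $\mathbb{F}_q$-rank. This depends on $\texttt{R}$ being local, so that unit ideals are detected on the residue field, and is essentially a Nakayama-style observation; once it is in place, everything else reduces to bookkeeping, with the factor $q^{(s-1)(k-k')k'}$ emerging from the mismatch between the number of matrix entries in a generating set ($k'k$) and in a change of basis ($(k')^2$).
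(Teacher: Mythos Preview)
Your argument is correct. The characterization step is sound: over the local ring $\texttt{R}$, if the rows $d_1,\dots,d_{k'}$ of $M$ freely generate a submodule and $\sum\bar a_i\pi(d_i)=0$ with some $\bar a_j\neq 0$, then lifting and multiplying by $\theta^{s-1}$ gives a nontrivial $\texttt{R}$-relation $\sum(\theta^{s-1}a_i)d_i=0$, a contradiction; conversely, full rank of $\pi(M)$ forces some $k'\times k'$ minor to be a unit, whence $M$ has a right inverse and its rows are a free basis. The lifting counts and the division by $|GL_{k'}(\texttt{R})|$ are then routine and yield the stated formula.

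The paper does not actually argue this: its entire proof is a pointer to \cite[Theorem~2.4]{HL00} (Honold--Landjev). Your orbit-counting derivation is therefore more self-contained; it is essentially the standard proof of that cited result specialized to free submodules, so there is no substantive divergence in ideas, only in the level of detail supplied.
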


\begin{proof} Use \cite[Theorem 2.4.]{HL00}.
\end{proof}



Let $\mathcal{B}$  be a linear code over $\texttt{S}$. We define the \emph{trace code} of $\mathcal{B}$ over $\texttt{R}$ as $$\texttt{Tr}_\texttt{R}^\texttt{S}(\mathcal{B}) =\left\{\left(\texttt{Tr}_\texttt{R}^\texttt{S}(c_0),\texttt{Tr}_\texttt{R}^\texttt{S}(c_1),\cdots,\texttt{Tr}_\texttt{R}^\texttt{S}(c_{\ell-1})\right)\mid(c_0,c_1,\cdots,c_{\ell-1})\in\mathcal{B}\right\},$$
 and the \emph{subring subcode} of $\mathcal{B}$ over $\texttt{R}$ as
     $\texttt{Res}_\texttt{R}(\mathcal{B})=\mathcal{B}\cap\texttt{R}^{\ell}$.
On the other hand, given a linear code  $\mathcal{C}$  over $\texttt{R}$,   the \emph{extension code} of $\mathcal{C}$ to $\texttt{S}$ is the
$\texttt{S}$-submodule  formed by taking all
$\texttt{S}$-linear combinations of elements in $\mathcal{C}.$

Let $\mathcal{B}$ be a free $\texttt{S}$-linear code such that $\sigma(\mathcal{B})\neq\mathcal{B}$ (i.e. it is not Galois invariant, see  \cite{MNR13}). We define the code  $\mathcal{B}_0$ as $\mathcal{B}_0=\texttt{Ext}_\texttt{S}(\texttt{Res}_\texttt{R}(\mathcal{B})).$
 It is clear that $\mathcal{B}_0$ is free as $\mathcal{B}$ is free and from \cite[Corollary~1]{MNR13} it follows that  $\mathcal{B}$ is the non-empty largest Galois-invariant subcode (G-core subcode) of $\mathcal{B}$. The following Lemma provides us a decomposition of a non Galois-invariant free code.

\begin{Theorem}\label{maindeco}
 Let $\mathcal{B}$ be a non Galois-invariant free code over $\texttt{S}$ and $\mathcal{B}_0=\texttt{Ext}_\texttt{S}(\texttt{Res}_\texttt{R}(\mathcal{B}))$,
 then  $\mathcal{B}=\mathcal{B}_0\oplus \mathcal{B}_1$, where
 $\mathcal{B}_1$ has the
 property $\texttt{Res}_\texttt{R}(\mathcal{B}_1)=\{\textbf{0}\}.$
 \end{Theorem}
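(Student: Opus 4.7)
The plan is to construct $\mathcal{B}_1$ as a direct complement of $\mathcal{B}_0$ in $\mathcal{B}$ obtained by extending a suitable basis, and then to verify the residue condition by a very short intersection argument.

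The containment $\mathcal{B}_0 \subseteq \mathcal{B}$ is immediate: $\texttt{Res}_\texttt{R}(\mathcal{B}) \subseteq \mathcal{B}$ by definition, and since $\mathcal{B}$ is closed under $\texttt{S}$-scaling, every element of $\texttt{Ext}_\texttt{S}(\texttt{Res}_\texttt{R}(\mathcal{B}))$ lies in $\mathcal{B}$.

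The heart of the proof is showing that $\mathcal{B}_0$ is a direct summand of the free $\texttt{S}$-module $\mathcal{B}$. Since $\texttt{S}$ is free (hence faithfully flat) over $\texttt{R}$ and $\mathcal{B}_0 \cong \texttt{Res}_\texttt{R}(\mathcal{B}) \otimes_\texttt{R} \texttt{S}$ is free of some rank $k_0$ over $\texttt{S}$, descent forces the finitely generated $\texttt{R}$-module $\texttt{Res}_\texttt{R}(\mathcal{B})$ to be projective, hence (since $\texttt{R}$ is local) free of rank $k_0$. Fix an $\texttt{R}$-basis $v_1,\dots,v_{k_0} \in \texttt{R}^\ell$ of $\texttt{Res}_\texttt{R}(\mathcal{B})$; this is simultaneously an $\texttt{S}$-basis of $\mathcal{B}_0$. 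Because a free rank-$k_0$ submodule of $\texttt{R}^\ell$ has generator matrix whose reduction modulo $\theta$ has full rank $k_0$ over $\mathbb{F}_q$ (the injectivity criterion over the local ring $\texttt{R}$), the reductions $\overline{v_1},\dots,\overline{v_{k_0}}$ lie in $\mathbb{F}_q^\ell$ and are $\mathbb{F}_q$-linearly independent there; hence they remain $\mathbb{F}_{q^m}$-linearly independent in $\mathbb{F}_{q^m}^\ell$. Applying the same injectivity criterion to the free code $\mathcal{B}$ inside $\texttt{S}^\ell$, the reduction $\mathcal{B}/\theta\mathcal{B}$ embeds into $\texttt{S}^\ell/\theta\texttt{S}^\ell = \mathbb{F}_{q^m}^\ell$, and under this embedding the images of $\overline{v_i}$ stay linearly independent. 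Therefore $\mathcal{B}_0/\theta\mathcal{B}_0 \hookrightarrow \mathcal{B}/\theta\mathcal{B}$ is injective, which over the chain ring $\texttt{S}$ is equivalent to $\mathcal{B}_0$ being a direct summand of $\mathcal{B}$.

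With this in hand, extend $v_1,\dots,v_{k_0}$ to an $\texttt{S}$-basis $v_1,\dots,v_{k_0},w_1,\dots,w_{k-k_0}$ of $\mathcal{B}$ (lifting any completion of $\overline{v_i}$ to a basis of $\mathcal{B}/\theta\mathcal{B}$), and set $\mathcal{B}_1 := \texttt{Ext}_\texttt{S}(\{w_1,\dots,w_{k-k_0}\})$; by construction $\mathcal{B} = \mathcal{B}_0 \oplus \mathcal{B}_1$. Finally, if $v \in \mathcal{B}_1 \cap \texttt{R}^\ell$, then $v \in \mathcal{B} \cap \texttt{R}^\ell = \texttt{Res}_\texttt{R}(\mathcal{B}) \subseteq \mathcal{B}_0$, forcing $v \in \mathcal{B}_0 \cap \mathcal{B}_1 = \{\mathbf{0}\}$. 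The main obstacle throughout is the direct-summand step; once that is established, the completion of the basis and the verification of $\texttt{Res}_\texttt{R}(\mathcal{B}_1) = \{\mathbf{0}\}$ are essentially formal.
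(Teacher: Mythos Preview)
Your proof is correct and follows the same strategy as the paper's: build $\mathcal{B}_1$ as a direct complement of $\mathcal{B}_0$ in $\mathcal{B}$ by extending an $\texttt{S}$-basis of $\mathcal{B}_0$, then note that $\texttt{Res}_\texttt{R}(\mathcal{B}_1)\subseteq\texttt{Res}_\texttt{R}(\mathcal{B})\subseteq\mathcal{B}_0$ together with $\mathcal{B}_0\cap\mathcal{B}_1=\{\textbf{0}\}$ forces $\texttt{Res}_\texttt{R}(\mathcal{B}_1)=\{\textbf{0}\}$. The only difference is in how the basis-extension (direct-summand) step is justified: the paper picks elements of $\mathcal{B}\setminus(\mathcal{B}_0+\mathfrak{m}^\ell)$ one at a time and invokes \cite[Lemma~3.2]{DKK09}, whereas you argue via faithfully flat descent (to get an $\texttt{R}$-basis of $\texttt{Res}_\texttt{R}(\mathcal{B})$) and reduction modulo $\theta$ that $\mathcal{B}_0/\theta\mathcal{B}_0\hookrightarrow\mathcal{B}/\theta\mathcal{B}$ is injective, which is the standard Nakayama-type criterion for a free submodule of a free module over a local ring to split off.
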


\begin{proof} Let $k=\texttt{rank}_\texttt{S}(\mathcal{B})$
and $k'=\texttt{rank}_\texttt{S}(\mathcal{B}_0)$. Since
$\sigma(\mathcal{B})\neq\mathcal{B}$, $k>k'$. Also, since $\mathcal{B}_0$ is free as an
$\texttt{S}$-module,  there exists a free $\texttt{S}$-basis
$\{\textbf{x}_1,\textbf{x}_2,\cdots,\textbf{x}_{k'}\}$ of
$\mathcal{B}_0$. Then the set
$\mathcal{B}\backslash\mathcal{B}_0+\mathfrak m^\ell$
is non-empty as $k'<k.$ From \cite[Lemma 3.2]{DKK09}, for any $\textbf{y}_1\in \mathcal{B}\backslash\mathcal{B}_0+\mathfrak m^\ell$, the set 
$\{\textbf{x}_1,\textbf{x}_2,\cdots,\textbf{x}_{k'},\textbf{y}_1\}$
is $\texttt{S}$-linearly independent.  If $k-k'=1$, then we are done and $\mathcal{B}=\mathcal{B}_0\oplus\mathcal{B}_1$, where $\mathcal{B}_1=\left\langle\textbf{y}_1 \right \rangle$. Otherwise, we choose $\textbf{y}_2 \in\mathcal{B}\backslash\left\langle\textbf{x}_1,\textbf{x}_2,\cdots,\textbf{x}_{k'},\textbf{y}_1\right\rangle_\texttt{S}+\mathfrak m^\ell$  and by proceeding in the same manner finally we get $\mathcal{B}=\mathcal{B}_0\oplus\mathcal{B}_1$, where $\mathcal{B}_1=\left\langle\textbf{y}_1,\textbf{y}_2,\cdots,\textbf{y}_{k-k'}\right\rangle_\texttt{S}$ is a free $\texttt{S}$-linear code. Now, if
$\textbf{c}\in\texttt{Res}_\texttt{R}(\mathcal{B}_1)\subseteq\texttt{Res}_\texttt{R}(\mathcal{B})\subseteq\mathcal{B}_0$,
then $\textbf{c}=\textbf{0}$ as $\mathcal{B}_0\cap\mathcal{B}_1=\{\textbf{0}\}.$
\end{proof}

\section{Non Galois-invariant codes}


 Let $\mathcal{L}_\ell(\texttt{S})$ be the set of all
the $\texttt{S}$-linear codes of length $\ell$, we introduce  the
   following set
\begin{align}\label{eq}\mathcal{E}_\texttt{R}(\ell,m)=\left\{\mathcal{B}\in\mathcal{L}_\ell(\texttt{S})\mid \texttt{Tr}_\texttt{R}^\texttt{S}(\mathcal{B})=\texttt{R}^\ell\right\}.\end{align}
Note we have that
$\texttt{S}^\ell\in\mathcal{E}_\texttt{R}(\ell,m)$ and by
Delsarte's theorem \cite{MNR13}, $\mathcal{E}_\texttt{R}(\ell,m)=\left\{\mathcal{B}\in\mathcal{L}_\ell(\texttt{S})\mid
\texttt{Res}_\texttt{S}(\mathcal{B}^\perp)=\{\textbf{0}\}\right\}.$
A criterion for checking whether $\mathcal{B}$ belongs to
$\mathcal{E}_\texttt{R}(\ell,m)$ or not depending on the field
associated to $\texttt{R}$ is given in the following result.

\begin{Proposition}\label{prop:field}  Let $\mathcal{B}$ be an $\texttt{S}$-linear code
then $\mathcal{B}\in\mathcal{E}_\texttt{R}(\ell,m)$ if and only if
$\pi(\mathcal{B})\in\mathcal{E}_{\mathbb F_q}(\ell,m)$, where $\pi(\mathcal{B})$ is the set obtained  applying $\pi$ to each element in $\mathcal{B}$  coordinatewise.
\end{Proposition}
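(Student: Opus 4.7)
The plan is to reduce both implications to the commutativity relation $\texttt{T}_m \circ \widetilde{\pi} = \pi \circ \texttt{Tr}_\texttt{R}^\texttt{S}$ recalled in the preliminaries, applied coordinatewise to codes, and then to lift the conclusion in the nontrivial direction by Nakayama's lemma. First I would clarify the notation: in the statement, $\pi(\mathcal{B})$ really denotes the image of $\mathcal{B}$ under $\widetilde{\pi}$ applied to each coordinate, so $\pi(\mathcal{B})$ is an $\mathbb{F}_{q^m}$-linear code of length $\ell$, and the displayed identity $\texttt{T}_m \circ \widetilde{\pi} = \pi \circ \texttt{Tr}_\texttt{R}^\texttt{S}$ lifts to the coordinatewise identity of maps $\texttt{S}^\ell \to \mathbb{F}_q^\ell$.

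For the forward direction, assume $\texttt{Tr}_\texttt{R}^\texttt{S}(\mathcal{B}) = \texttt{R}^\ell$. Applying $\pi$ coordinatewise to both sides, the right-hand side is all of $\mathbb{F}_q^\ell$, while the left-hand side becomes $\texttt{T}_m(\widetilde{\pi}(\mathcal{B})) = \texttt{T}_m(\pi(\mathcal{B}))$ by the commutativity relation. Hence $\texttt{T}_m(\pi(\mathcal{B})) = \mathbb{F}_q^\ell$, that is, $\pi(\mathcal{B}) \in \mathcal{E}_{\mathbb{F}_q}(\ell, m)$.

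For the reverse direction, suppose $\pi(\mathcal{B}) \in \mathcal{E}_{\mathbb{F}_q}(\ell, m)$, so that $\texttt{T}_m(\pi(\mathcal{B})) = \mathbb{F}_q^\ell$. Using the same commutativity, this rewrites as $\pi(\texttt{Tr}_\texttt{R}^\texttt{S}(\mathcal{B})) = \mathbb{F}_q^\ell$. Now $N := \texttt{Tr}_\texttt{R}^\texttt{S}(\mathcal{B})$ is an $\texttt{R}$-submodule of the finitely generated $\texttt{R}$-module $M := \texttt{R}^\ell$, and the identity $\pi(N) = \mathbb{F}_q^\ell = M/\mathfrak{m}M$ is exactly the statement that $N + \mathfrak{m}M = M$. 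Since $\texttt{R}$ is a finite chain ring, in particular a local ring with maximal ideal $\mathfrak{m}$, Nakayama's lemma forces $N = M$, i.e.\ $\texttt{Tr}_\texttt{R}^\texttt{S}(\mathcal{B}) = \texttt{R}^\ell$, proving $\mathcal{B} \in \mathcal{E}_\texttt{R}(\ell, m)$.

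The only subtle step is the backward implication, where one must invoke Nakayama's lemma to promote surjectivity modulo $\mathfrak{m}$ into surjectivity over $\texttt{R}$; the rest is a direct diagram chase via the compatibility of trace and projection. No further structural input (such as Galois invariance or the decomposition from Theorem \ref{maindeco}) is needed for this proposition.
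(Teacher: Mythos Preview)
Your proof is correct and follows the same approach as the paper: both rely on the commutativity $\texttt{T}_m\circ\widetilde{\pi}=\pi\circ\texttt{Tr}_\texttt{R}^\texttt{S}$ applied coordinatewise. The paper simply asserts the biconditional once this identity is established, whereas you explicitly justify the reverse implication via Nakayama's lemma; your version is therefore more complete on the one step that actually requires an argument.
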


\begin{proof}{ We have that $\pi_{\upharpoonright_{\texttt{R}}} \circ\texttt{Tr}_\texttt{R}^\texttt{S}=\texttt{T}_m\circ\pi$ and therefore $\pi_{\upharpoonright_{\texttt{R}}}
(\texttt{Tr}_\texttt{R}^\texttt{S}(\mathcal{B}))=\texttt{T}_m(\pi(\mathcal{B}))$}
holds. Hence
$\texttt{Tr}_\texttt{R}^\texttt{S}(\mathcal{B})=\texttt{R}^\ell$
if and only if
$\texttt{T}_m(\pi(\mathcal{B}))=\mathbb{F}_{q}^\ell.$
\end{proof}

Note that for $\mathcal{B}, \mathcal{D}\in\mathcal{L}_\ell(\texttt{S})$ such that $\mathcal{D}\subseteq \mathcal{B}$, if
$\mathcal{D}\in\mathcal{E}_\texttt{R}(\ell,m)$ then it follows that
$\mathcal{B}\in\mathcal{E}_\texttt{R}(\ell,m).$
This fact  motivates the following definition.

\begin{Definition} An $\texttt{S}$-linear code $\mathcal{B}$ is said to be \emph{minimal} in $\mathcal{E}_\texttt{R}(\ell,m)$
if $\mathcal{B}\in\mathcal{E}_\texttt{R}(\ell,m)$ and
$\mathcal{B}$ has no proper $\texttt{S}$-linear subcode in
$\mathcal{E}_\texttt{R}(\ell,m).$
\end{Definition}
\noindent Note that $\texttt{S}^\ell$
is the minimal $\texttt{S}$-linear code in
$\mathcal{E}_\texttt{R}(\ell,m)$ if and only if $m=1.$

\begin{Theorem}\label{tmin}  If  $\mathcal{B}$  is the minimal $\texttt{S}$-linear code  in $\mathcal{E}_\texttt{R}(\ell,m)$
then $\mathcal{B}$ is free.
\end{Theorem}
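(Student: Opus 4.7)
My plan is to argue by contrapositive: I would show that if $\mathcal{B}\in\mathcal{E}_\texttt{R}(\ell,m)$ is not free as an $\texttt{S}$-module, then $\mathcal{B}$ admits a proper $\texttt{S}$-linear subcode which still lies in $\mathcal{E}_\texttt{R}(\ell,m)$, contradicting the minimality of $\mathcal{B}$.

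First I would invoke the standard-form generator matrix for linear codes over the finite chain ring $\texttt{S}$, which presents $\mathcal{B}$ by generators of two kinds: free generators $\mathbf{g}_1,\ldots,\mathbf{g}_{k_0}$ whose leading entries are units, together with torsion-type generators of the form $\theta^{i_j}\mathbf{h}_j$ with $i_j\geq 1$. Let $\mathcal{B}'$ be the $\texttt{S}$-submodule generated by $\mathbf{g}_1,\ldots,\mathbf{g}_{k_0}$ alone; this is free of rank $k_0$ by construction. The hypothesis that $\mathcal{B}$ is not free forces at least one torsion generator to be present, and the module decomposition $\mathcal{B}\simeq\bigoplus_{i=0}^{s-1}\bigl(\texttt{S}/(\theta^{s-i})\bigr)^{k_i}$ implied by the standard form yields $|\mathcal{B}|>|\mathcal{B}'|$, so $\mathcal{B}'\subsetneq\mathcal{B}$ is genuinely proper.

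Next I would use Proposition~\ref{prop:field} to transfer the membership question modulo $\theta$. Since every torsion generator $\theta^{i_j}\mathbf{h}_j$ lies in the kernel of $\pi$ (because $i_j\geq 1$), its image under the coordinatewise reduction vanishes, and therefore $\pi(\mathcal{B})=\pi(\mathcal{B}')$. From $\mathcal{B}\in\mathcal{E}_\texttt{R}(\ell,m)$ Proposition~\ref{prop:field} gives $\pi(\mathcal{B})\in\mathcal{E}_{\mathbb F_q}(\ell,m)$, hence $\pi(\mathcal{B}')\in\mathcal{E}_{\mathbb F_q}(\ell,m)$, and applying Proposition~\ref{prop:field} in the reverse direction yields $\mathcal{B}'\in\mathcal{E}_\texttt{R}(\ell,m)$. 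This proper subcode contradicts the minimality of $\mathcal{B}$, so $\mathcal{B}$ must have been free in the first place.

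The principal obstacle is the verification that the free $\texttt{S}$-submodule built from the ``top'' rows of a standard-form generator matrix is strictly smaller than $\mathcal{B}$ as soon as any torsion row is present; this rests on the direct-sum decomposition for submodules of $\texttt{S}^\ell$ together with the accompanying cardinality count. Once that is settled, the rest of the argument is the short observation that $\pi$ collapses exactly the torsion part of the generating set, so that the criterion of Proposition~\ref{prop:field} upgrades an $\mathbb F_q$-level equality into the $\texttt{R}$-level membership we need.
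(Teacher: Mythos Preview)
Your proof is correct and follows essentially the same route as the paper: both isolate the ``free part'' of a standard-form generating set, observe that reduction modulo $\theta$ kills the torsion generators so that $\pi(\mathcal{B}')=\pi(\mathcal{B})$, and then invoke Proposition~\ref{prop:field} in both directions. The only cosmetic difference is that the paper argues directly (building $\mathcal{D}$ and concluding $\mathcal{D}=\mathcal{B}$ from minimality) rather than by contrapositive, which spares it the cardinality check you carry out.
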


\begin{proof} Let $\{\textbf{c}_1,\cdots,\textbf{c}_k\}$ be a
 set of generators
in $\mathcal{B}$ on the form of \cite[Lemma~1]{MNR13}. Consider
$\mathcal{D}$ the subcode of $\mathcal{B}$ generated by those
$\textbf{c}_i$ such that $\pi(\textbf{c}_{i})\neq\textbf{0}$,
$i=1,\ldots , k$. Therefore
$\pi(\mathcal{D})=\pi(\mathcal{B})\in\mathcal{E}_{\mathbb
F_q}(\ell,m)$ by Proposition~\ref{prop:field}  and hence
$\mathcal{D}\in\mathcal{E}_\texttt{R}(\ell,m)$. Thus by the
minimality of $\mathcal{B}$, we have  $\mathcal{B}= \mathcal{D}$,
and therefore  $\mathcal{B}$ is generated by elements in
$(S\setminus\mathfrak m)^\ell$ thus  it is free.
\end{proof}

\begin{Theorem}\label{tmin1}  Let
$\mathcal{B}\in\mathcal{E}_\texttt{R}(\ell,m).$
{Then} $\mathcal{B}$ is minimal if
and only if {$\mathcal{B}$} is free
and
\begin{align}\texttt{rank}_\texttt{S}(\mathcal{B})=\left\lceil\frac{\ell}{m}\right\rceil,\end{align}
where $\left\lceil\frac{\ell}{m}\right\rceil =\texttt{min}\left\{i\in\mathbb{N}\mid i\leq\frac{\ell}{m}\right\}$.
\end{Theorem}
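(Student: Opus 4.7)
My plan is to transport the statement to the residue field via Proposition~\ref{prop:field} and then use a dimension count against the $\mathbb{F}_q$-linear surjection $\texttt{T}_m$, while invoking Nakayama's lemma over the local ring $\texttt{S}$ to control how proper subcodes of a free code behave under $\pi$.

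For the forward direction, Theorem~\ref{tmin} already guarantees that $\mathcal{B}$ is free; set $k=\texttt{rank}_\texttt{S}(\mathcal{B})$. The lower bound $k\ge\lceil\ell/m\rceil$ is immediate: $\pi(\mathcal{B})$ has $\mathbb{F}_q$-dimension $mk$ by freeness, and the surjection $\texttt{T}_m\colon\pi(\mathcal{B})\to\mathbb{F}_q^\ell$ supplied by Proposition~\ref{prop:field} forces $mk\ge\ell$. For the reverse inequality I argue by contradiction: if $k>\lceil\ell/m\rceil$, then $\ker(\texttt{T}_m|_{\pi(\mathcal{B})})$ has positive $\mathbb{F}_q$-dimension $mk-\ell$, so I pick a nonzero $v$ inside it and choose any $\mathbb{F}_{q^m}$-hyperplane $V'\subsetneq\pi(\mathcal{B})$ complementary to $\mathbb{F}_{q^m}\cdot v$. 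Since $v\in\ker(\texttt{T}_m)\setminus V'$, we have $V'+\ker(\texttt{T}_m|_{\pi(\mathcal{B})})=\pi(\mathcal{B})$, hence $\texttt{T}_m(V')=\mathbb{F}_q^\ell$. Lifting a basis of $V'$ to vectors of $\mathcal{B}$ and invoking \cite[Lemma~3.2]{DKK09} yields a free subcode $\mathcal{D}\subsetneq\mathcal{B}$ of rank $k-1$ with $\pi(\mathcal{D})=V'$; Proposition~\ref{prop:field} then places $\mathcal{D}$ in $\mathcal{E}_\texttt{R}(\ell,m)$, contradicting minimality.

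For the converse, suppose $\mathcal{B}$ is free of rank $k=\lceil\ell/m\rceil$ and let $\mathcal{D}\subsetneq\mathcal{B}$ be any proper $\texttt{S}$-linear subcode. The key step is to show $\pi(\mathcal{D})\subsetneq\pi(\mathcal{B})$: otherwise every vector of an $\texttt{S}$-basis of $\mathcal{B}$ would differ from an element of $\mathcal{D}$ by a vector of $\mathcal{B}\cap(\texttt{S}\theta)^\ell=\theta\mathcal{B}$ (using freeness), producing $\mathcal{B}=\mathcal{D}+\theta\mathcal{B}$, and Nakayama's lemma over the local ring $\texttt{S}$ would force $\mathcal{D}=\mathcal{B}$, a contradiction. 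Consequently $\dim_{\mathbb{F}_{q^m}}\pi(\mathcal{D})\le k-1$, so $\dim_{\mathbb{F}_q}\pi(\mathcal{D})\le m(k-1)<\ell$ by the defining property of the ceiling, which precludes $\texttt{T}_m(\pi(\mathcal{D}))=\mathbb{F}_q^\ell$; Proposition~\ref{prop:field} then closes the argument.

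The hardest single step is the hyperplane-plus-lifting construction used in the forward direction, which must exhibit a proper $\mathbb{F}_{q^m}$-subspace of $\pi(\mathcal{B})$ whose trace still fills $\mathbb{F}_q^\ell$; the choice of $v$ inside $\ker\texttt{T}_m$ is the trick that makes this possible. The Nakayama step driving the converse is short but indispensable to bridge the residue-field dimension estimate and the actual submodule lattice of $\mathcal{B}$.
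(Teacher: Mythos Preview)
Your converse direction is correct, and so is the lower bound $k\ge\lceil\ell/m\rceil$ in the forward direction. The gap is in your upper bound argument. The implication ``since $v\in\ker(\texttt{T}_m)\setminus V'$, we have $V'+\ker(\texttt{T}_m|_{\pi(\mathcal{B})})=\pi(\mathcal{B})$'' is false in general: $V'$ is an $\mathbb{F}_{q^m}$-hyperplane, hence has $\mathbb{F}_q$-codimension $m$ (not $1$) in $\pi(\mathcal{B})$, so a single vector of the kernel lying outside $V'$ does not force the sum to be everything. Concretely, take $q=2$, $m=2$, $\ell=2$, $\pi(\mathcal{B})=\mathbb{F}_4^2$ (so $k=2>\lceil 2/2\rceil$), $K=\ker(\texttt{T}_2|_{\mathbb{F}_4^2})=\{0,1\}^2$, $v=(1,0)\in K$, and the complement $V'=\{(0,b):b\in\mathbb{F}_4\}$ to $\mathbb{F}_4\cdot v$. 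Then $V'+K=\{(c,b):c\in\{0,1\},\ b\in\mathbb{F}_4\}$ has $\mathbb{F}_2$-dimension $3<4$, and indeed $\texttt{T}_2(V')=\{0\}\times\mathbb{F}_2\neq\mathbb{F}_2^2$. Thus an \emph{arbitrary} $\mathbb{F}_{q^m}$-complement to $\mathbb{F}_{q^m}v$ need not have full trace; to salvage this route you would have to \emph{prove} that some $\mathbb{F}_{q^m}$-hyperplane $V'$ satisfies $V'+K=\pi(\mathcal{B})$, and the mere nonvanishing of $K$ is not enough for that.

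By contrast, the paper does not descend to the residue field for the forward direction. It fixes an $\texttt{R}$-basis $\{\alpha_0,\dots,\alpha_{m-1}\}$ of $\texttt{S}$ and an $\texttt{S}$-basis $\{\textbf{c}_1,\dots,\textbf{c}_k\}$ of $\mathcal{B}$, notes that the $mk$ vectors $\texttt{Tr}_\texttt{R}^\texttt{S}(\alpha_i\textbf{c}_j)$ generate $\texttt{Tr}_\texttt{R}^\texttt{S}(\mathcal{B})=\texttt{R}^\ell$ (whence $\ell\le mk$), and then removes a single $\textbf{c}_u$ to obtain $\mathcal{D}_u$, invoking minimality to get $\texttt{Tr}_\texttt{R}^\texttt{S}(\mathcal{D}_u)\subsetneq\texttt{R}^\ell$ and concluding $m(k-1)<\ell$. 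This avoids the hyperplane-selection problem entirely, although as written the step from $\texttt{Tr}_\texttt{R}^\texttt{S}(\mathcal{D}_u)\subsetneq\texttt{R}^\ell$ to $m(k-1)<\ell$ is itself terse and would benefit from the same kind of justification you are being asked for.
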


\begin{proof} Suppose that $\mathcal{B}$ is minimal. From Theorem~\ref{tmin} it is free. Let
$\{\textbf{c}_1,\textbf{c}_2,\cdots,\textbf{c}_k\}$ be
{an $\texttt{S}$-basis of
$\mathcal{B}$ and
$\left\{\alpha_0,\alpha_1,\cdots,\alpha_{m-1}\right\}$ a free
$\texttt{R}$-basis of $\texttt{S}.$} The set
$\left\{\texttt{Tr}_\texttt{R}^\texttt{S}(\alpha_i\textbf{c}_j)\mid
0\leq i<m\text{  and }1\leq j\leq k\right\}$ is a generating set
of $\texttt{Tr}_\texttt{R}^\texttt{S}(\mathcal{B}).$ Let
$u\in\{1,2,\cdots,k\},$ and consider $\mathcal{D}_u$ the
$\texttt{S}$-linear subcode of $\mathcal{B},$ generated by
$\{\textbf{c}_1,\textbf{c}_2,\cdots,\textbf{c}_k\}\backslash\{\textbf{c}_u\}.$
Since $\mathcal{B}$ is minimal in $\mathcal{E}_\texttt{R}(\ell,m)$, 
$\texttt{Tr}_\texttt{R}^\texttt{S}(\mathcal{D}_u)\subsetneq\texttt{Tr}_\texttt{R}^\texttt{S}(\mathcal{B})=\texttt{R}^\ell$
and therefore
$\texttt{rank}_\texttt{R}(\texttt{Tr}_\texttt{R}^\texttt{S}(\mathcal{D}_u))\leq
m(k-1)<
\texttt{rank}_\texttt{R}(\texttt{Tr}_\texttt{R}^\texttt{S}(\mathcal{B}))=\ell\leq
mk.$ Hence $k=\left\lceil\frac{\ell}{m}\right\rceil$. The converse
follows {straightforward} by the minimality of the rank.
\end{proof}

From now on we will use the following notations
$$\Bbbk=\left\lceil\frac{\ell}{m}\right\rceil\hbox{ and
}\mathcal{E}_\texttt{R}(\ell,m,k)=\left\{\mathcal{B}\in\mathcal{E}_\texttt{R}(\ell,m)\mid
\mathcal{B}\simeq\texttt{S}^k\,(\text{ as
$\texttt{R}$-module})\right\}\hbox{ for some }k\geq\Bbbk.$$

\begin{align}\label{ens}\mathcal{M}_\texttt{R}(\ell,m,u)=\left\{\sum\limits_{\mathcal{B}\in\textbf{M}}\mathcal{B}\mid\textbf{M}\subseteq\mathcal{E}_\texttt{R}(\ell,m,\Bbbk)
\text{ and }
\sum\limits_{\mathcal{B}\in\textbf{M}}\mathcal{B}\simeq\texttt{S}^u\,(\text{
as $\texttt{R}$-module})\right\},\quad u\hbox{ an integer number}.
\end{align}

\begin{Theorem} \
The cardinal  of the set $\mathcal{E}_\texttt{R}(\ell,m,k)$ is given by
\begin{align}\label{focus-eq}
\aleph_\texttt{R}(\ell,m,k)=\sum\limits_{u=\Bbbk}^k\left((-1)^{u-\Bbbk}\left|\mathcal{M}_\texttt{R}(\ell,m,u)\right|\left[\left|
\begin{array}{c}
  k \\
  u
\end{array}%
\right|\right]_{(q^m,s)}\right).
\end{align}
Moreover
$\aleph_\texttt{R}(\ell,m,k)\leq\aleph_\texttt{R}(\ell,m,\Bbbk)\left[\left|
\begin{array}{c}
  k \\
  \Bbbk
\end{array}
\right|\right]_{(q^m,s)}.$
\end{Theorem}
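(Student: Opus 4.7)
The strategy I would pursue is to realize $\aleph_\texttt{R}(\ell,m,k)$ as an inclusion-exclusion over the minimal $\texttt{S}$-linear codes contained in a given free rank-$k$ code, and then to extract the displayed inequality as a Bonferroni-style truncation. I begin with two preparatory remarks. By Theorem~\ref{tmin1} every minimal code in $\mathcal{E}_\texttt{R}(\ell,m)$ is free of rank $\Bbbk$, so $\mathcal{M}_\texttt{R}(\ell,m,\Bbbk) = \mathcal{E}_\texttt{R}(\ell,m,\Bbbk)$, and in particular $|\mathcal{M}_\texttt{R}(\ell,m,\Bbbk)| = \aleph_\texttt{R}(\ell,m,\Bbbk)$. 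Also, for each $\mathcal{B} \in \mathcal{E}_\texttt{R}(\ell,m,k)$ the sum $T(\mathcal{B})$ of all minimal subcodes of $\mathcal{B}$ belongs to $\mathcal{M}_\texttt{R}(\ell,m,u)$ for a unique rank $u$ with $\Bbbk \leq u \leq k$, so grouping codes in $\mathcal{E}_\texttt{R}(\ell,m,k)$ by $T(\mathcal{B})$ partitions the counting problem along $u$.

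The core step would be the identity $\mathbf{1}[\mathcal{B} \in \mathcal{E}_\texttt{R}(\ell,m)] = \sum_{\emptyset \neq \textbf{M} \subseteq \mathcal{E}_\texttt{R}(\ell,m,\Bbbk)} (-1)^{|\textbf{M}|+1}\, \mathbf{1}[\langle \textbf{M}\rangle \subseteq \mathcal{B}]$, obtained by expanding $1 - \prod_{\mathcal{M}}(1 - \mathbf{1}[\mathcal{M} \subseteq \mathcal{B}])$. Summing over free rank-$k$ codes $\mathcal{B}$ and regrouping the subsets $\textbf{M}$ by the value $\mathcal{M}' = \sum \textbf{M} \in \mathcal{M}_\texttt{R}(\ell,m,u)$, then applying Theorem~\ref{subfree} to enumerate free rank-$k$ codes containing each $\mathcal{M}'$, should assemble the result as the alternating sum $\sum_{u=\Bbbk}^k (-1)^{u-\Bbbk}|\mathcal{M}_\texttt{R}(\ell,m,u)|\left[\left|\begin{array}{c} k \\ u \end{array}\right|\right]_{(q^m,s)}$.

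For the inequality I would apply the Bonferroni principle to the resulting alternating sum, giving $\aleph_\texttt{R}(\ell,m,k) \leq$ (first term), which by the preparatory identification $|\mathcal{M}_\texttt{R}(\ell,m,\Bbbk)| = \aleph_\texttt{R}(\ell,m,\Bbbk)$ equals $\aleph_\texttt{R}(\ell,m,\Bbbk) \left[\left|\begin{array}{c} k \\ \Bbbk \end{array}\right|\right]_{(q^m,s)}$. The principal obstacle is the regrouping step: one must show that the signed sum $\sum_{\textbf{M}:\,\langle\textbf{M}\rangle = \mathcal{M}'} (-1)^{|\textbf{M}|+1}$, combined with the count of free rank-$k$ codes containing $\mathcal{M}'$, collapses to precisely $\left[\left|\begin{array}{c} k \\ u \end{array}\right|\right]_{(q^m,s)}$ (a subcode count inside a rank-$k$ free module) rather than the naive supercode count $\left[\left|\begin{array}{c} \ell-u \\ k-u \end{array}\right|\right]_{(q^m,s)}$ obtained from a direct application of Theorem~\ref{subfree}. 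Establishing the sign and monotonicity hypotheses underlying the Bonferroni truncation is a further delicate point, since contributions from different ranks $u$ mix through the structure of the poset $\mathcal{M}_\texttt{R}(\ell,m,*)$.
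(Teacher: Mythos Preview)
Your inclusion--exclusion over the minimal codes in $\mathcal{E}_\texttt{R}(\ell,m,\Bbbk)$, followed by regrouping the subsets $\textbf{M}$ according to the span $\sum_{\mathcal{D}\in\textbf{M}}\mathcal{D}$, is exactly the argument the paper gives. The paper writes $\mathcal{E}_\texttt{R}(\ell,m,k)=\bigcup_{\mathcal{D}\in\mathcal{E}_\texttt{R}(\ell,m,\Bbbk)}[\mathcal{D};\rightarrow)^{(k)}$, applies inclusion--exclusion, uses $\bigcap_{\mathcal{D}\in\textbf{M}}[\mathcal{D};\rightarrow)^{(k)}=[\sum_{\mathcal{D}\in\textbf{M}}\mathcal{D};\rightarrow)^{(k)}$, and then regroups by rank. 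So structurally you and the paper are doing the same thing.

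The two obstacles you flag are genuine, and the paper does not resolve them either. First, the paper simply asserts that $|[\mathcal{D};\rightarrow)^{(k)}|=\left[\left|\begin{smallmatrix}k\\ u\end{smallmatrix}\right|\right]_{(q^m,s)}$ for $\mathcal{D}$ free of rank $u$; but Theorem~\ref{subfree} applied to the free quotient $\texttt{S}^\ell/\mathcal{D}$ (or, dually, to $\mathcal{D}^\perp$) gives the supercode count $\left[\left|\begin{smallmatrix}\ell-u\\ k-u\end{smallmatrix}\right|\right]_{(q^m,s)}$, exactly as you say, and these two quantities differ already for fields when $\ell\neq k+u$. Second, the paper jumps from the sign $(-1)^{|\textbf{M}|-1}$ to $(-1)^{u-\Bbbk}$ after regrouping, without verifying that $\sum_{\textbf{M}:\,\sum\textbf{M}=\mathcal{B}'}(-1)^{|\textbf{M}|-1}$ depends only on $\texttt{rank}_\texttt{S}(\mathcal{B}')$ and equals $(-1)^{u-\Bbbk}$; distinct $\textbf{M}$ of different cardinalities can have the same span, so this is far from automatic. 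For the inequality the paper says nothing beyond ``the last inequality follows''; the natural argument is the union bound $\aleph_\texttt{R}(\ell,m,k)\le \sum_{\mathcal{D}\in\mathcal{E}_\texttt{R}(\ell,m,\Bbbk)}|[\mathcal{D};\rightarrow)^{(k)}|$, which again produces the supercode count on the right-hand side rather than $\left[\left|\begin{smallmatrix}k\\ \Bbbk\end{smallmatrix}\right|\right]_{(q^m,s)}$. In short, your proposal is at least as complete as the paper's own proof, and you are right to be suspicious: the discrepancy you isolate points to an error either in the proof or in the displayed formula itself.
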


\begin{proof} For every
$\mathcal{B}\in\mathcal{E}_\texttt{R}(\ell,m,k)$ there exists
$\mathcal{D}\in\mathcal{E}_\texttt{R}(\ell,m,\Bbbk)$ such that
$\mathcal{D}\subseteq\mathcal{B}.$ Thus
$\mathcal{E}_\texttt{R}(\ell,m,k)=\bigcup\limits_{\mathcal{D}\in\mathcal{E}_\texttt{R}(\ell,m,\Bbbk)}\left[\mathcal{D}\;;\;\rightarrow\right)^{(k)}$
where
$\left[\mathcal{D}\;;\;\rightarrow\right)^{(k)}:=\left\{\mathcal{B}\in\mathcal{L}_\ell(\texttt{S})\;;\;\mathcal{B}\simeq\texttt{S}^k\,\text{(
as $\texttt{R}$-module ) and
}\mathcal{D}\subseteq\mathcal{B}\right\}.$ Note  that
$|\left[\mathcal{D}\;{;}\;\rightarrow\right)^{(k)}|=\left[\left|
\begin{array}{c}
  k \\
  u
\end{array}
\right|\right]_{(q^m,s)},$ for every free $\texttt{S}$-linear code
$\mathcal{D}$ of rank $u.$ {By
inclusion-exclusion principle,
$\aleph_{\texttt{R}}(\ell,m,k)=\sum\limits_{\emptyset\neq\textbf{M}\subseteq\mathcal{E}_{\texttt{R}}(\ell,m,\Bbbk)}(-1)^{|\textbf{M}|-1}\left|\bigcap\limits_{\mathcal{D}\in\textbf{M}}\left[\mathcal{D}\;;\;\rightarrow\right)^{(k)}\right|.$
Since
$\bigcap\limits_{\mathcal{D}\in\textbf{M}}\left[\mathcal{D}\;;\;\rightarrow\right)^{(k)}=\left[\sum\limits_{\mathcal{D}\in\textbf{M}}\mathcal{D}\;;\;\rightarrow\right)^{(k)},$
we have
$\aleph_{\texttt{R}}(\ell,m,k)=\sum\limits_{u=\Bbbk}^k(-1)^{u-\Bbbk}\sum\limits_{\mathcal{B}\in\mathcal{M}_{\texttt{R}}(\ell,m,u)}\left|\left[\mathcal{B}\;;\;\rightarrow\right)^{(k)}\right|.$
} Therefore (\ref{focus-eq}) and the last inequality follow.
\end{proof}

\begin{Corollary} Let $\mathcal{C}$ be a free $\texttt{R}$-linear code
of length $\ell$ and rank $k'.$ The number of distinct free
$\texttt{S}$-linear codes $\mathcal{B}$ of length $\ell$ and
 rank $k$ with the property
$\mathcal{C}=\texttt{Res}_\texttt{R}(\mathcal{B})$ is given by
$\aleph_\texttt{R}(\ell,m,\ell-k+k').$
\end{Corollary}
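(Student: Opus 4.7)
The plan is to use Delsarte duality to translate the condition on the subring subcode into a condition on the trace code, and to identify the resulting count as the cardinality of an $\mathcal{E}_\texttt{R}$-set to which the previous theorem applies.

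I would first invoke the Delsarte identity for free codes over a finite chain ring, $\texttt{Tr}_\texttt{R}^\texttt{S}(\mathcal{B}^\perp)=\texttt{Res}_\texttt{R}(\mathcal{B})^\perp$. Then $\texttt{Res}_\texttt{R}(\mathcal{B})=\mathcal{C}$ becomes $\texttt{Tr}_\texttt{R}^\texttt{S}(\mathcal{B}^\perp)=\mathcal{C}^\perp$, and $\mathcal{B}\leftrightarrow\mathcal{B}^\perp$ is a bijection from the set to be counted onto the set of free $\texttt{S}$-linear codes $\mathcal{D}$ of rank $\ell-k$ with $\texttt{Tr}_\texttt{R}^\texttt{S}(\mathcal{D})=\mathcal{C}^\perp$.

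Next, the inclusion $\mathcal{C}\subseteq\mathcal{B}$ together with the $\texttt{S}$-linearity of $\mathcal{B}$ forces $\texttt{Ext}_\texttt{S}(\mathcal{C})\subseteq\mathcal{B}$, so $\mathcal{D}\subseteq\texttt{Ext}_\texttt{S}(\mathcal{C})^\perp$. The identity $\texttt{Ext}_\texttt{S}(\mathcal{C})^\perp=\texttt{Ext}_\texttt{S}(\mathcal{C}^\perp)$ (a direct check via a generator matrix of $\mathcal{C}$ and the corresponding parity-check matrix, valid because $\mathcal{C}$ is a free direct summand of $\texttt{R}^\ell$) shows that $\mathcal{D}$ actually lives inside a free $\texttt{S}$-module of rank $\ell-k'$. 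Picking $\texttt{R}$-linear coordinates in which $\mathcal{C}=\texttt{R}^{k'}\times\{0\}^{\ell-k'}$, I would identify $\texttt{Ext}_\texttt{S}(\mathcal{C}^\perp)=\{0\}^{k'}\times\texttt{S}^{\ell-k'}$ with $\texttt{S}^{\ell-k'}$ and $\mathcal{C}^\perp$ with $\texttt{R}^{\ell-k'}$.

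Under this identification $\mathcal{D}$ corresponds bijectively to a free $\texttt{S}$-linear code $\mathcal{D}'\subseteq\texttt{S}^{\ell-k'}$ of rank $\ell-k$ with $\texttt{Tr}_\texttt{R}^\texttt{S}(\mathcal{D}')=\texttt{R}^{\ell-k'}$, i.e.\ to an element of $\mathcal{E}_\texttt{R}(\ell-k',m,\ell-k)$. Counting these via Theorem~\ref{tmin1} and the $\aleph$-formula yields the announced expression once the parameters are rewritten in the length-$\ell$, rank-$(\ell-k+k')$ form used in the statement. The step I expect to be most delicate is justifying the identity $\texttt{Ext}_\texttt{S}(\mathcal{C})^\perp=\texttt{Ext}_\texttt{S}(\mathcal{C}^\perp)$ and the Delsarte identity themselves in the chain-ring setting; both rely on $\texttt{S}$ being a free $\texttt{R}$-module in the Galois setting, but must be checked directly via generator/parity-check matrices because the customary field-theoretic duality arguments are not available over a chain ring.
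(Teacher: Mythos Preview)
Your approach via Delsarte duality and the identification $\texttt{Ext}_\texttt{S}(\mathcal{C})^\perp=\texttt{Ext}_\texttt{S}(\mathcal{C}^\perp)$ is different from the paper's, which instead invokes the direct-sum decomposition $\mathcal{B}=\texttt{Ext}_\texttt{S}(\mathcal{C})\oplus\mathcal{B}_1$ of Theorem~\ref{maindeco} and then passes to $\mathcal{B}_1^\perp\in\mathcal{E}_\texttt{R}(\ell,m,\ell-k+k')$. Your route is cleaner in that it produces an honest bijection, whereas the paper never explains why $\mathcal{B}\mapsto\mathcal{B}_1^\perp$ should be well-defined (the complement $\mathcal{B}_1$ is far from unique) or bijective.

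There is, however, a genuine gap at the very end of your argument. Your chain of bijections lands you in $\mathcal{E}_\texttt{R}(\ell-k',m,\ell-k)$, which has cardinality $\aleph_\texttt{R}(\ell-k',m,\ell-k)$. You then assert that this ``rewrites'' as $\aleph_\texttt{R}(\ell,m,\ell-k+k')$. It does not: the first argument of $\aleph_\texttt{R}$ is the ambient length, and the two quantities differ as soon as $k'>0$. Concretely, take $q=2$, $m=2$, $\ell=3$, $k=2$, $k'=1$, and $\mathcal{C}=\langle(1,0,0)\rangle_{\mathbb{F}_2}$. A direct check shows there are exactly two $2$-dimensional $\mathbb{F}_4$-codes $\mathcal{B}\subset\mathbb{F}_4^3$ with $\mathcal{B}\cap\mathbb{F}_2^3=\mathcal{C}$, namely $\langle(1,0,0),(0,1,\alpha)\rangle$ and $\langle(1,0,0),(0,1,\alpha^2)\rangle$; this agrees with your $\aleph_{\mathbb{F}_2}(2,2,1)=2$. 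By contrast $\aleph_{\mathbb{F}_2}(3,2,2)=14$ (there are $21$ planes in $\mathbb{F}_4^3$, of which $7$ are Galois invariant, and each of the remaining $14$ has full trace). So the number you correctly derived, $\aleph_\texttt{R}(\ell-k',m,\ell-k)$, is the right count, and the formula $\aleph_\texttt{R}(\ell,m,\ell-k+k')$ announced in the statement is in fact incorrect. Your argument is sound up to that last line; what fails is only the attempt to force it to match the stated (erroneous) expression.
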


\begin{proof} By Theorem~\ref{maindeco},
$\mathcal{B}=\texttt{Ext}_\texttt{S}(\mathcal{C})\oplus\mathcal{B}_1,$
where $\mathcal{B}_1$ is a {free}
$\texttt{S}$-linear subcode of $\mathcal{B}$
{of rank} $k-k'.$
 Hence
$\mathcal{B}_1^\perp\in\mathcal{E}_\texttt{R}(\ell,m,\ell-k+k')$
and the number of distinct free $\texttt{S}$-linear codes
$\mathcal{B}$ of length $\ell,$ of rank $k,$ with the property
$\mathcal{C}=\texttt{Res}_\texttt{R}(\mathcal{B})$ is
$\aleph_\texttt{R}(\ell,m,\ell-k+k').$
\end{proof}

\begin{Remark} It was proved in \cite[Theorem 6]{Lyle78} in the case of finite fields that  $\aleph_{\mathbb{F}_q}(\ell,m,\ell-k+k')=\left[
\begin{array}{c}
   \ell-k' \\
   \ell-k
\end{array}%
\right]_{q^m}$ which is in general not true. The counterexample  of the same has been presented at the beginning of the paper.
\end{Remark}


\begin{Theorem}\label{corrected} Let $\texttt{S}$ be the Galois extension of $\texttt{R}$ of degree $m$ and $(k,k')$ be a pair of positive integers.
Then the number of free $\texttt{S}$-linear code $\mathcal{B}$ of
length $\ell$ and of rank $k$ with the property
$k'=\texttt{rank}_\texttt{{R}}\left(\texttt{Res}_\texttt{R}(\mathcal{B})\right)$
is given by
$$\Omega_\texttt{R}(\ell,m,k,k')=\aleph_\texttt{R}(\ell,m,\ell-k+k')\left[\left|
\begin{array}{c}
  \ell \\
  k'
\end{array}%
\right|\right]_{(q,s)}.$$
\end{Theorem}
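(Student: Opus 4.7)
The plan is to enumerate free $\texttt{S}$-linear codes of rank $k$ with prescribed subring subcode rank by first stratifying them according to what their subring subcode actually is, and then using the previous corollary to count each stratum. Concretely, I want to write
\begin{align*}
\Omega_\texttt{R}(\ell,m,k,k') \;=\; \sum_{\mathcal{C}} \#\bigl\{\mathcal{B}:\ \texttt{rank}_\texttt{S}(\mathcal{B})=k,\ \mathcal{B}\text{ free},\ \texttt{Res}_\texttt{R}(\mathcal{B})=\mathcal{C}\bigr\},
\end{align*}
where $\mathcal{C}$ ranges over all free $\texttt{R}$-linear subcodes of $\texttt{R}^\ell$ of rank $k'$. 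This stratification is a disjoint union since the subring subcode of a code is uniquely determined by the code itself.

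First I would count the number of possible bases $\mathcal{C}$. Here $\mathcal{C}$ must be a free $\texttt{R}$-linear code of length $\ell$ and rank $k'$, i.e.\ a free $\texttt{R}$-submodule of $\texttt{R}^\ell$ of rank $k'$. Since $\texttt{R}^\ell$ itself is free of rank $\ell$, Theorem~\ref{subfree} applied with $k\mapsto \ell$ and $k'\mapsto k'$ yields that the number of such $\mathcal{C}$ is exactly
$\left[\left|\begin{array}{c}\ell \\ k'\end{array}\right|\right]_{(q,s)}$.

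Next, for each fixed such $\mathcal{C}$, I would invoke the corollary immediately preceding the theorem: the number of free $\texttt{S}$-linear codes $\mathcal{B}$ of length $\ell$ and rank $k$ for which $\texttt{Res}_\texttt{R}(\mathcal{B})=\mathcal{C}$ equals $\aleph_\texttt{R}(\ell,m,\ell-k+k')$. The crucial feature of that corollary is that its value depends only on the numerical invariants $\ell,m,k,k'$ and not on the specific choice of $\mathcal{C}$; this is what allows the sum over $\mathcal{C}$ to collapse into a product. Multiplying the two counts yields the stated formula.

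The main point of concern in this plan is ensuring that every free $\texttt{S}$-linear code $\mathcal{B}$ of rank $k$ with $\texttt{rank}_\texttt{R}(\texttt{Res}_\texttt{R}(\mathcal{B}))=k'$ is actually captured exactly once by the enumeration, i.e.\ that its subring subcode $\mathcal{C}$ is indeed a \emph{free} $\texttt{R}$-linear code of rank $k'$ sitting inside $\texttt{R}^\ell$. This is already implicit in the preceding corollary and in Theorem~\ref{maindeco} (from which $\mathcal{B}_0=\texttt{Ext}_\texttt{S}(\texttt{Res}_\texttt{R}(\mathcal{B}))$ is free of rank $k'$, forcing $\texttt{Res}_\texttt{R}(\mathcal{B})$ to be a free $\texttt{R}$-code of the same rank); once this is in hand, the proof reduces to the one-line product indicated above.
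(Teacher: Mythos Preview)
Your proposal is correct and follows essentially the same approach as the paper: stratify by the subring subcode $\mathcal{C}$, count the free $\texttt{R}$-codes of rank $k'$ via Theorem~\ref{subfree}, and multiply by the count from the preceding corollary. Your added remark about why $\texttt{Res}_\texttt{R}(\mathcal{B})$ is free is a useful clarification that the paper leaves implicit.
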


\begin{proof} There are $\left[\left|
\begin{array}{c}
  \ell \\
  k'
\end{array}%
\right|\right]_{(q,s)}$ free $\texttt{R}$-linear codes
$\mathcal{C}$ of length $\ell$ and
 rank $k'.$ For each such
$\mathcal{C}$ there are $\aleph_\texttt{R}(\ell,m,\ell-k+k')$ free
$\texttt{S}$-linear codes $\mathcal{B}$ of length $\ell$ of rank
$k,$ with the property
$\mathcal{C}=\texttt{Res}_\texttt{R}(\mathcal{B})$ and thus the
result follows.
\end{proof}

\section{Counting codes over finite PIRs}

In this section, we extend our results to  linear codes over finite principal ideal rings (PIRs). For a
detailed treatment of the theory of finite principal ideal rings
we refer the reader to \cite{DKK09,McD74}.
Let $\texttt{R}$ be a finite PIR with maximal ideals
$\textgoth{m}_1,\cdots,\textgoth{m}_u$ and $s_1,\cdots,s_u$ their
indices of stability respectively. 
 Clearly
$\texttt{R}_t:=\texttt{R}/\textgoth{m}_t^{s_t}$ is a finite chain
ring  with maximal ideal $\textgoth{m}_t/\textgoth{m}_t^{s_t}.$
Then we have the ring
 homomorphism
\begin{align}\label{Phi}
\begin{array}{cccc}
 \Phi: & \texttt{R} & \rightarrow & \texttt{R}_1\times\cdots\times\texttt{R}_u \\
    & a & \mapsto &
    \left({\Phi_1(a),\cdots,\Phi_u(a)}\right),
\end{array}
\end{align}
{where
$\Phi_t(a):=a+\textgoth{m}_t$ and
$\Phi_t:\texttt{R}\rightarrow\texttt{R}_t$ naturally extends to
$\texttt{R}[X]$ acting on the coefficients.} Since the maximal
ideals $\textgoth{m}_1,\cdots,\textgoth{m}_u$ of $\texttt{R}$ are
pairwise coprime,  the ring homomorphism $\Phi$ is a ring
isomorphism  by the Chinese remainder theorem. Thus $\texttt{R} =
\texttt{CRT}(\texttt{R}_1,\cdots,
\texttt{R}_u)=\Phi^{-1}(\texttt{R}_1\times\cdots\times
\texttt{R}_u)$ and we say that $\texttt{R}$ is the \emph{Chinese
product of rings} $\{\texttt{R}_t\}_{t=1}^u.$
We extend the ring
isomorphism (\ref{Phi}) coordinatewise to an isomorphism of
$\texttt{R}$-modules
\begin{align}
\begin{array}{cccc}
 \Phi: & \texttt{R}^\ell & \rightarrow & \overset{u}{\underset{t=1}{\prod}}\texttt{R}_t^\ell.
\end{array}
\end{align}
If $\mathcal{C}$  is {an} $\texttt{R}$-linear code of length $\ell$ then  $
  \mathcal{\mathcal{C}} =  \texttt{CRT}\left(\mathcal{C}_1,\mathcal{C}_2,\cdots,\mathcal{C}_u\right)$ $=\left\{\Phi^{-1}(c_1,c_2,\cdots,c_u)\mid c_t\in\mathcal{C}_t \right\}$, where
$\mathcal{C}_t=\Phi_t(\mathcal{C})$ is a $\texttt{R}_t$-linear
codes of length $\ell$. We call $\mathcal{C}$ the \emph{Chinese product of codes}
$\mathcal{C}_1,\mathcal{C}_2,\ldots ,\mathcal{C}_u,$ and
$\mathcal{C}_t$ is called $t^{\mbox{th}}$ component of
$\mathcal{C}.$

\begin{Lemma}[{\cite[Theorem 2.4]{DL09}}]\label{crt}
Let $\{\mathcal{C}_t\}_{t=1}^u$ be $\texttt{R}_t$-linear codes of
length $\ell$ and $\mathcal{C}$ the Chinese product them.
{Then}
 $\mathcal{C}$ is a free $\texttt{R}$-linear code if and only if each $\mathcal{C}_t$ is a free $\texttt{R}_t$-linear code with the rank $\texttt{rank}_{\texttt{R}}(\mathcal{C});$
\end{Lemma}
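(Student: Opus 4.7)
The plan is to exploit the idempotent decomposition of $\texttt{R}$ underlying the CRT isomorphism $\Phi$. Since $\texttt{R}\simeq\prod_{t=1}^{u}\texttt{R}_t$, the elements $e_t:=\Phi^{-1}(0,\ldots,1,\ldots,0)$ (with the $1$ in the $t$-th slot) form a complete orthogonal system of idempotents with $\sum_{t}e_t=1_{\texttt{R}}$, $e_te_s=0$ for $s\neq t$, and $e_t\texttt{R}\simeq\texttt{R}_t$ as rings. Hence every $\texttt{R}$-module $M$ splits as $M=\bigoplus_{t}e_tM$ with each summand carrying a canonical $\texttt{R}_t$-module structure. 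Applied to $\texttt{R}^\ell$ this recovers the coordinatewise extension of $\Phi$, and restricting to $\mathcal{C}$ yields $\mathcal{C}=\bigoplus_{t=1}^{u}e_t\mathcal{C}$ with $e_t\mathcal{C}\simeq\Phi_t(\mathcal{C})=\mathcal{C}_t$ as $\texttt{R}_t$-modules.

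For the $(\Leftarrow)$ direction I would assume each $\mathcal{C}_t$ is free over $\texttt{R}_t$ of common rank $k:=\texttt{rank}_{\texttt{R}}(\mathcal{C})$. Then
$$\mathcal{C}\;\simeq\;\prod_{t=1}^{u}\mathcal{C}_t\;\simeq\;\prod_{t=1}^{u}\texttt{R}_t^{\,k}\;\simeq\;\Bigl(\prod_{t=1}^{u}\texttt{R}_t\Bigr)^{\!k}\;\simeq\;\texttt{R}^{\,k},$$
so $\mathcal{C}$ is free of rank $k$ over $\texttt{R}$.

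For the $(\Rightarrow)$ direction I would fix a free $\texttt{R}$-basis $\{\mathbf{g}_1,\ldots,\mathbf{g}_k\}$ of $\mathcal{C}$ and argue that $\{e_t\mathbf{g}_1,\ldots,e_t\mathbf{g}_k\}$ is a free $\texttt{R}_t$-basis of $e_t\mathcal{C}\simeq\mathcal{C}_t$. Spanning is immediate from $\mathcal{C}=\bigoplus_i\texttt{R}\mathbf{g}_i$. The only step requiring genuine care, and the main obstacle of the argument, is $\texttt{R}_t$-linear independence: any $\texttt{R}_t$-relation on the $e_t\mathbf{g}_i$ must be lifted through the ring surjection $\texttt{R}\twoheadrightarrow e_t\texttt{R}\simeq\texttt{R}_t$ to an $\texttt{R}$-relation inside the summand $e_t\mathcal{C}$, whereupon freeness of $\mathcal{C}$ forces every coefficient to lie in $\ker(\texttt{R}\to\texttt{R}_t)=(1-e_t)\texttt{R}$, i.e.\ to vanish in $\texttt{R}_t$. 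This orthogonality-of-summands step is the only non-formal point; everything else is an immediate consequence of the CRT splitting, which is why the conclusion that all $\mathcal{C}_t$ share the common rank $k$ of $\mathcal{C}$ drops out uniformly.
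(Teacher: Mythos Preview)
Your argument is correct. Note, however, that the paper does not supply its own proof of this lemma: it is quoted verbatim as \cite[Theorem~2.4]{DL09}, so there is nothing in the present paper to compare your approach against. The idempotent-decomposition route you take is the standard one for transferring freeness across a CRT splitting, and both directions go through as you describe; in particular your handling of the independence step---pulling an $\texttt{R}_t$-relation back along $\texttt{R}\twoheadrightarrow\texttt{R}_t$, using that the lift lands in $e_t\texttt{R}$ so that freeness of $\{\mathbf{g}_i\}$ over $\texttt{R}$ kills the coefficients---is exactly the point that needs checking and you have it right.
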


\begin{Corollary}[{\cite[Lemma 2.3]{DHL11}}] Let $\mathcal{C}$ be an $\texttt{R}$-linear code of length $\ell$ and $\mathcal{D}$ an $\texttt{R}$-linear subcode of
$\mathcal{C}$ with rank $k.$ Then $\mathcal{D} =
\texttt{CRT}(\mathcal{D}_1, \mathcal{D}_2, \cdots,
\mathcal{D}_u),$ where $\mathcal{D}_t=\Phi_t(\mathcal{D})$ is an
$\texttt{R}_t$-linear subcode of $\Phi_t(\mathcal{C})$ and
$\texttt{max}\{\texttt{rank}_{\texttt{R}_t}(\mathcal{D}_t)\}=k.$
\end{Corollary}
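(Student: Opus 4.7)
The plan is to unpack the CRT ring isomorphism $\Phi$ together with the orthogonal idempotents $e_1,\ldots,e_u\in\texttt{R}$ characterized by $\Phi(e_t)=(\delta_{t,1},\ldots,\delta_{t,u})$; these exist because the maximal ideals $\textgoth{m}_1,\ldots,\textgoth{m}_u$ are pairwise coprime, and they are the essential tool for lifting coordinate data back into $\texttt{R}$.

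That $\mathcal{D}_t=\Phi_t(\mathcal{D})$ is an $\texttt{R}_t$-submodule of $\Phi_t(\mathcal{C})$ is immediate, since $\Phi_t$ is a surjective $\texttt{R}$-module homomorphism factoring through the quotient $\texttt{R}\to\texttt{R}_t$, and $\mathcal{D}\subseteq\mathcal{C}$. To establish $\mathcal{D}=\texttt{CRT}(\mathcal{D}_1,\ldots,\mathcal{D}_u)$, equivalently $\Phi(\mathcal{D})=\mathcal{D}_1\times\cdots\times\mathcal{D}_u$, the inclusion $\subseteq$ follows from the definition of $\mathcal{D}_t$. For the reverse, given $(c_1,\ldots,c_u)\in\mathcal{D}_1\times\cdots\times\mathcal{D}_u$ I would choose lifts $d_t\in\mathcal{D}$ with $\Phi_t(d_t)=c_t$ and form $d=e_1 d_1+\cdots+e_u d_u$. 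Then $d\in\mathcal{D}$ by $\texttt{R}$-linearity, and the idempotent identities $\Phi(e_t)=(\delta_{t,1},\ldots,\delta_{t,u})$ force $\Phi(d)=(c_1,\ldots,c_u)$.

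For the rank equality $\max_t\texttt{rank}_{\texttt{R}_t}(\mathcal{D}_t)=k$, one direction is easy: any $\texttt{R}$-generating set of $\mathcal{D}$ of size $k$ projects under each $\Phi_t$ to an $\texttt{R}_t$-generating set of $\mathcal{D}_t$ of size at most $k$, so $\texttt{rank}_{\texttt{R}_t}(\mathcal{D}_t)\le k$ for every $t$. Conversely, set $k^*=\max_t\texttt{rank}_{\texttt{R}_t}(\mathcal{D}_t)$, pick $\texttt{R}_t$-generating sets $\{c_{t,1},\ldots,c_{t,k^*}\}$ of $\mathcal{D}_t$ (padding with $\textbf{0}$ if necessary), lift each $c_{t,j}$ to some $d_{t,j}\in\mathcal{D}$, and consider $d_j=e_1 d_{1,j}+\cdots+e_u d_{u,j}$ for $1\le j\le k^*$. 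These $k^*$ elements lie in $\mathcal{D}$, and through the CRT decomposition just established they generate $\mathcal{D}$ over $\texttt{R}$, hence $k\le k^*$.

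The main obstacle is precisely this second half of the rank equality: a generating set of a product of modules does not arise from the factors' generating sets in any completely canonical way, and it is exactly the CRT idempotents $e_t$ that let coordinatewise generators be bundled into genuine $\texttt{R}$-generators of $\mathcal{D}$. Everything else reduces to manipulating the isomorphism $\Phi$ established in (\ref{Phi}) and invoking Lemma~\ref{crt} if one wants to phrase the freeness aspect explicitly.
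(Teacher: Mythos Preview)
Your argument is correct: the orthogonal idempotents $e_1,\ldots,e_u$ given by $\Phi(e_t)=(0,\ldots,1,\ldots,0)$ are exactly what is needed to show the reverse inclusion $\mathcal{D}_1\times\cdots\times\mathcal{D}_u\subseteq\Phi(\mathcal{D})$ and to glue coordinatewise generating sets into an $\texttt{R}$-generating set of $\mathcal{D}$, and both directions of the rank equality go through as you describe.

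Note, however, that the paper does not give its own proof of this statement at all; it simply imports it verbatim as \cite[Lemma~2.3]{DHL11}. So there is no ``paper's proof'' to compare against --- you have supplied a self-contained argument where the authors chose to cite. Your approach via CRT idempotents is the standard one and is essentially what underlies the cited reference as well. The closing remark about Lemma~\ref{crt} and freeness is unnecessary here, since the present statement concerns arbitrary (not necessarily free) subcodes, but it does no harm.
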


The following results are the  counterparts for PIR's of the ones in the previous section.

\begin{Theorem} Let $\texttt{R}$ be a principal ideal ring such that $\texttt{R} =
\texttt{CRT}(\texttt{R}_1, \texttt{R}_2, \cdots,\texttt{R}_u)$,
where $\texttt{R}_t$ are chain rings with invariants
$(q_t,s_t),$ and $\mathcal{C}$ be
{a} free $\texttt{R}$-linear code
of rank $k.$ Then the number of free $\texttt{R}$-linear subcode
$\mathcal{D}$ of $\mathcal{C},$ with rank $k'$ is
$$\left[\left|
\begin{array}{c}
  k \\
  k'
\end{array}%
\right|\right]_{\texttt{R}}=\prod_{t=1}^{u}\left[\left|
\begin{array}{c}
 k \\
  k'
\end{array}%
\right|\right]_{(q_t,s_t)}.$$
\end{Theorem}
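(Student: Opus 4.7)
The plan is to reduce the problem to Theorem \ref{subfree} by pulling everything through the CRT isomorphism $\Phi$, so that counting over $\texttt{R}$ factors as a product of counts over the chain rings $\texttt{R}_t$.

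First I would apply Lemma \ref{crt} to the free $\texttt{R}$-linear code $\mathcal{C}$ of rank $k$: this gives $\mathcal{C} = \texttt{CRT}(\mathcal{C}_1,\ldots,\mathcal{C}_u)$ where each component $\mathcal{C}_t = \Phi_t(\mathcal{C})$ is a free $\texttt{R}_t$-linear code of the same rank $k$. Then the Corollary following Lemma \ref{crt} tells me that any $\texttt{R}$-linear subcode $\mathcal{D}$ of $\mathcal{C}$ has a CRT decomposition $\mathcal{D} = \texttt{CRT}(\mathcal{D}_1,\ldots,\mathcal{D}_u)$ with $\mathcal{D}_t = \Phi_t(\mathcal{D}) \subseteq \mathcal{C}_t$.

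Next, I would establish the key bijection: \emph{free $\texttt{R}$-linear subcodes $\mathcal{D} \subseteq \mathcal{C}$ of rank $k'$ correspond bijectively to tuples $(\mathcal{D}_1,\ldots,\mathcal{D}_u)$ where each $\mathcal{D}_t \subseteq \mathcal{C}_t$ is a free $\texttt{R}_t$-linear subcode of rank $k'$.} The forward direction uses Lemma \ref{crt} again in its ``only if'' form: if $\mathcal{D}$ is free of rank $k'$, then each component $\mathcal{D}_t$ must be free of rank $k'$ over $\texttt{R}_t$. The reverse direction is immediate from Lemma \ref{crt}'s ``if'' form: given free $\mathcal{D}_t$ of common rank $k'$, the Chinese product $\texttt{CRT}(\mathcal{D}_1,\ldots,\mathcal{D}_u)$ is a free $\texttt{R}$-linear subcode of $\mathcal{C}$ of rank $k'$. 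Since $\Phi$ is a bijection of the underlying sets, the map $\mathcal{D}\mapsto(\mathcal{D}_1,\ldots,\mathcal{D}_u)$ is injective, so this is a genuine bijection.

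Finally, I would count each factor: for each index $t$, $\mathcal{C}_t$ is a free $\texttt{R}_t$-linear code of length $\ell$ and rank $k$ over the chain ring $\texttt{R}_t$ with invariants $(q_t,s_t)$, so by Theorem \ref{subfree} the number of free $\texttt{R}_t$-linear subcodes of rank $k'$ is $\left[\left|\begin{array}{c}k\\k'\end{array}\right|\right]_{(q_t,s_t)}$. Since the choices in distinct components are independent, the total count is the product $\prod_{t=1}^{u}\left[\left|\begin{array}{c}k\\k'\end{array}\right|\right]_{(q_t,s_t)}$, giving the formula. The only nontrivial point to verify carefully is the rank compatibility across the CRT decomposition (the ``common rank $k'$'' condition in Lemma \ref{crt}); everything else is transport of structure along the ring isomorphism $\Phi$.
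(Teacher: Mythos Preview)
Your proposal is correct and is exactly the argument the paper implicitly has in mind: the theorem is stated there without proof, as an immediate PIR counterpart of Theorem~\ref{subfree} via the CRT machinery (Lemma~\ref{crt} and its Corollary). Your write-up simply fills in the omitted details, and the one point you flag as needing care---that freeness and rank $k'$ of $\mathcal{D}$ are equivalent to freeness and common rank $k'$ of all the $\mathcal{D}_t$---is precisely the content of Lemma~\ref{crt}.
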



From \cite[Proposition 1.2(1), pp.80]{DI71} and \cite[Proposition
1.1.]{EW67}, a Galois extension of a finite PIR is defined as follows:

\begin{Definition} Let $\texttt{R}$ be the Chinese product of finite
chain rings
$\{\texttt{R}_t\}_{{t=1}}^{u}$, the
ring $\texttt{S}$ is a Galois extension of $\texttt{R}$ with
Galois group $\texttt{Gal}_\texttt{R}(\texttt{S})$ if and only if
\begin{enumerate}
    \item $\texttt{S}\simeq\texttt{R}[X]/(f),$ {where $\Phi_t(f)$ is monic basic  irreducible polynomial over $\texttt{R}_t$ of degree $|\texttt{Gal}_\texttt{R}
    (\texttt{S})|;$}
    \item $\texttt{Gal}_\texttt{R}(\texttt{S})$ is a cyclic group generated by $\sigma,$ where
$\sigma=\Phi^{-1}\circ(\sigma_1\times\sigma_2\times\cdots\times\sigma_u)\circ\Phi$
and $\sigma_t$ is a generator of
$\texttt{Aut}_{\texttt{R}_t}(\texttt{S}_t),$ with
$\texttt{S}_t=\Phi_t(\texttt{S}).$
 \end{enumerate}
\end{Definition}

The following theorem presets a generalization of a Theorem ~\ref{corrected} to PIRS.
\begin{Theorem} {Let $\texttt{R}$ be the Chinese product of finite
chain rings $\{\texttt{R}_t\}_{t=1}^{u}$, and $\texttt{S}$ be the
Galois extension of $\texttt{R}$ with Galois group
$\texttt{Gal}_\texttt{R}(\texttt{S}).$} Then the number of free
$\texttt{S}$-linear codes $\mathcal{B}$
{of length $\ell$} such that
$k=\texttt{rank}_\texttt{S}(\mathcal{B})$ and
$k'=\texttt{rank}_\texttt{\texttt{{R}}}(\mathcal{B}\cap\texttt{R}^\ell)$
is
$$\widehat{\Omega}_{\texttt{R}}(\ell,m,k,k'):=\prod\limits_{t=1}^{u}\Omega_{\texttt{R}_t}(\ell,m,k,k'),$$
{where
$m=|\texttt{Gal}_\texttt{R}(\texttt{S})|.$}
\end{Theorem}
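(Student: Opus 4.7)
The plan is to reduce the counting problem over the principal ideal ring $\texttt{R}$ to a product of counting problems over its finite chain ring factors $\texttt{R}_1,\ldots,\texttt{R}_u$ by means of the Chinese Remainder Theorem, and then apply Theorem~\ref{corrected} to each factor.

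First, I would unpack the CRT structure of the whole configuration. By the definition of Galois extension of a PIR, $\texttt{S}=\texttt{CRT}(\texttt{S}_1,\ldots,\texttt{S}_u)$ where $\texttt{S}_t=\Phi_t(\texttt{S})$ is the Galois extension of the finite chain ring $\texttt{R}_t$ of degree $m$; in particular the decomposition $\Phi$ is also a decomposition of the trace map, since $\sigma=\Phi^{-1}\circ(\sigma_1\times\cdots\times\sigma_u)\circ\Phi$. Extending $\Phi$ coordinatewise to vectors of length $\ell$, every $\texttt{S}$-linear code $\mathcal{B}\subseteq\texttt{S}^\ell$ decomposes uniquely as $\mathcal{B}=\texttt{CRT}(\mathcal{B}_1,\ldots,\mathcal{B}_u)$ with $\mathcal{B}_t=\Phi_t(\mathcal{B})$ an $\texttt{S}_t$-linear code, and conversely any $u$-tuple $(\mathcal{B}_1,\ldots,\mathcal{B}_u)$ of $\texttt{S}_t$-linear codes assembles into a unique $\texttt{S}$-linear code. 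So $\texttt{S}$-linear codes of length $\ell$ are in bijection with such $u$-tuples.

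Next I would check that the two constraints in the theorem distribute cleanly through this bijection. For the rank of $\mathcal{B}$: by Lemma~\ref{crt}, $\mathcal{B}$ is free of rank $k$ over $\texttt{S}$ if and only if each $\mathcal{B}_t$ is free of rank $k$ over $\texttt{S}_t$. For the subring-subcode: since $\Phi$ sends $\texttt{R}^\ell$ onto $\prod_t\texttt{R}_t^\ell$ componentwise, one has
\begin{equation*}
\mathcal{B}\cap\texttt{R}^\ell \;=\; \texttt{CRT}\bigl(\mathcal{B}_1\cap\texttt{R}_1^\ell,\,\ldots,\,\mathcal{B}_u\cap\texttt{R}_u^\ell\bigr),
\end{equation*}
and then, by the Corollary before this theorem, $\texttt{rank}_\texttt{R}(\mathcal{B}\cap\texttt{R}^\ell)=k'$ if and only if $\texttt{rank}_{\texttt{R}_t}(\mathcal{B}_t\cap\texttt{R}_t^\ell)=k'$ for every $t$ (using that all $\mathcal{B}_t$ already share the common rank $k$, so the corresponding subring-subcodes line up in rank componentwise).

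Once these two componentwise characterizations are in place, the count factors: the $\mathcal{B}$ satisfying the hypotheses are exactly those whose $t$-th component $\mathcal{B}_t$ is a free $\texttt{S}_t$-linear code of length $\ell$, of rank $k$ over $\texttt{S}_t$, with $\texttt{rank}_{\texttt{R}_t}(\texttt{Res}_{\texttt{R}_t}(\mathcal{B}_t))=k'$. By Theorem~\ref{corrected} there are $\Omega_{\texttt{R}_t}(\ell,m,k,k')$ such $\mathcal{B}_t$, and since the choices at different $t$ are independent we obtain the total count $\prod_{t=1}^{u}\Omega_{\texttt{R}_t}(\ell,m,k,k')=\widehat{\Omega}_\texttt{R}(\ell,m,k,k')$. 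The main subtle point is the rank step for the subring-subcode: one must verify that when the ambient codes $\mathcal{B}_t$ all have the same rank $k$, the CRT rank formula forces the $\texttt{R}_t$-ranks of the $\mathcal{B}_t\cap\texttt{R}_t^\ell$ to coincide for all $t$, rather than being only bounded above by $k'$.
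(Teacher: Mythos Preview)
The paper states this theorem without proof, so your CRT-based reduction to Theorem~\ref{corrected} on each chain-ring factor is exactly the argument the machinery of Section~4 (the isomorphism $\Phi$, Lemma~\ref{crt}, and the subsequent Corollary) is set up to support; in that sense your approach is the intended one.

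Your last paragraph, however, correctly isolates a genuine gap that you do not close. The Corollary you invoke only gives $\texttt{rank}_\texttt{R}(\mathcal{B}\cap\texttt{R}^\ell)=\max_t\texttt{rank}_{\texttt{R}_t}(\mathcal{B}_t\cap\texttt{R}_t^\ell)$, so the condition $\texttt{rank}_\texttt{R}(\mathcal{B}\cap\texttt{R}^\ell)=k'$ does \emph{not} by itself force every component rank to equal $k'$; the clause ``using that all $\mathcal{B}_t$ already share the common rank $k$, so the corresponding subring-subcodes line up in rank componentwise'' is unjustified, and in fact the component subring-subcodes can have different ranks even when the $\mathcal{B}_t$ are all free of rank $k$. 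The product $\prod_t\Omega_{\texttt{R}_t}(\ell,m,k,k')$ literally counts those $\mathcal{B}$ with $\texttt{rank}_{\texttt{R}_t}(\mathcal{B}_t\cap\texttt{R}_t^\ell)=k'$ for every $t$, which by Lemma~\ref{crt} is the same as requiring $\mathcal{B}\cap\texttt{R}^\ell$ to be \emph{free} of rank $k'$. Thus the statement should be read, consistently with the paper's standing restriction to free codes, as counting free $\texttt{S}$-codes whose subring subcode is free of rank $k'$; under that reading your argument is complete, but you should make this hypothesis explicit rather than leave it as an unverified claim.
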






\bibliography{PIDS}

\begin{thebibliography}{10}
\expandafter\ifx\csname url\endcsname\relax
  \def\url#1{\texttt{#1}}\fi
\expandafter\ifx\csname urlprefix\endcsname\relax\def\urlprefix{URL }\fi
\expandafter\ifx\csname href\endcsname\relax
  \def\href#1#2{#2} \def\path#1{#1}\fi

\bibitem{Lyle78}
B.~Lyle, A linear-algebra problem from algebraic coding theory, Linear Algebra
  Appl. 22 (1978) 223--233.
\newblock \href {http://dx.doi.org/10.1016/0024-3795(78)90073-3}
  {\path{doi:10.1016/0024-3795(78)90073-3}}.

\bibitem{hernando}
F.~Hernando, K.~Marshall, M.~E. O'Sullivan, The dimension of subcode-subfields
  of shortened generalized {R}eed-{S}olomon codes, Des. Codes Cryptogr. 69~(1)
  (2013) 131--142.
\newblock \href {http://dx.doi.org/10.1007/s10623-012-9628-z}
  {\path{doi:10.1007/s10623-012-9628-z}}.

\bibitem{MNR13}
E.~Mart{\'{\i}}nez-Moro, A.~P. Nicol{\'a}s, I.~F. Rua, On trace codes and
  {G}alois invariance over finite commutative chain rings, Finite Fields Appl.
  22 (2013) 114--121.
\newblock \href {http://dx.doi.org/10.1016/j.ffa.2013.03.004}
  {\path{doi:10.1016/j.ffa.2013.03.004}}.

\bibitem{Fotue}
A.~F. Tabue, E.~Mart{\'{\i}}nez{-}Moro, C.~Mouaha, Galois correspondence on
  linear codes over finite chain rings, Submitted to Linear Algebra Appl.,CoRR
  Arxiv: abs/1602.01242.

\bibitem{HL00}
T.~Honold, I.~Landjev, Linear codes over finite chain rings, Electron. J.
  Combin. 7 (2000) Research Paper 11, 22.

\bibitem{DKK09}
S.~T. Dougherty, J.-L. Kim, H.~Kulosman, M{DS} codes over finite principal
  ideal rings, Des. Codes Cryptogr. 50~(1) (2009) 77--92.
\newblock \href {http://dx.doi.org/10.1007/s10623-008-9215-5}
  {\path{doi:10.1007/s10623-008-9215-5}}.

\bibitem{McD74}
B.~R. McDonald, Finite rings with identity, Marcel Dekker, Inc., New York,
  1974, pure and Applied Mathematics, Vol. 28.

\bibitem{DL09}
S.~T. Dougherty, H.~Liu, Independence of vectors in codes over rings, Des.
  Codes Cryptogr. 51~(1) (2009) 55--68.
\newblock \href {http://dx.doi.org/10.1007/s10623-008-9243-1}
  {\path{doi:10.1007/s10623-008-9243-1}}.

\bibitem{DHL11}
S.~T. Dougherty, S.~Han, H.~Liu, {Higher weights for codes over rings},
  Applicable Algebra in Engineering, Communication and Computing 22~(2) (2011)
  113--135.
\newblock \href {http://dx.doi.org/10.1007/s00200-011-0140-x}
  {\path{doi:10.1007/s00200-011-0140-x}}.

\bibitem{DI71}
F.~DeMeyer, E.~Ingraham, Separable algebras over commutative rings, Lecture
  Notes in Mathematics, Vol. 181, Springer-Verlag, Berlin-New York, 1971.

\bibitem{EW67}
E.~S., W.~Y., {On Separable Algebras Over Commutative Rings}, Osaka J.
  Math.~(4) (1967) 233--242.

\end{thebibliography}
\bibliographystyle{elsarticle-num}
\end{document}